\theoremstyle{plain}
\theoremstyle{plain}
\newtheorem{lemma}{\protect\lemmaname}
\theoremstyle{plain}
\newtheorem{proposition}{\protect\propositionname}
\theoremstyle{definition}
\theoremstyle{plain}
\theoremstyle{plain}
\newtheorem{corollary}{\protect\corollaryname}
\theoremstyle{plain}
\providecommand{\definitionname}{Definition}
\providecommand{\lemmaname}{Lemma}
\providecommand{\propositionname}{Proposition}
\providecommand{\theoremname}{Theorem}
\providecommand{\assumptionname}{Assumption}
\providecommand{\corollaryname}{Corollary}
\providecommand{\remarkname}{Remark}
\begin{document}

\title{ Auctions with Tokens:  Monetary Policy as a Mechanism Design Choice.\thanks{I'm grateful to John Asker, Rainer B{\"o}hme, Sylvain Chassang, Lin William Cong, Andreas Park, two anonymous referees, the associate editor, conference
participants at the Crypto Asset Lab conference 2021, CSH Workshop on Decentralized Finance 2022, ASSA 2023, seminars at CREST, CY University (Thema) 
for their insightful comments and suggestions.}
}

\author{ Andrea Canidio
\thanks{CoW Protocol, Lisbon, Portugal; andrea@cow.fi.}
}

\maketitle

\vspace*{-.3 cm}

\noindent First version: September 30, 2021. This version: \today. Please check \href{http://andreacanidio.com/research}{here} for the latest version.

\vspace*{-.3 cm}

\begin{abstract}

I study a repeated auction in which payments are made with a blockchain token created and initially owned by the auction designer. Unlike the ``virtual money'' previously examined in mechanism design, such tokens can be saved and traded outside the mechanism. I show that the present-discounted value of expected revenues equals that of a conventional dollar auction, but revenues accrue earlier and are less volatile. The optimal monetary policy burns the tokens used for payment, a practice common in blockchain-based protocols. I also show that the same outcome can be reproduced in a dollar auction if the auctioneer issues a suitable dollar-denominated security. This equivalence breaks down with moral hazard and contracting frictions: with severe contracting frictions the token auction dominates, whereas with mild contracting frictions the dollar auction combined with a dollar-denominated financial instrument is preferred.
\\
\noindent \textbf{JEL classification}: D44, E42, L86  \\
\textbf{Keywords}: Mechanism design, Auctions, Blockchain,  Cryptocurrencies, Tokens, Private Money
\end{abstract}

\maketitle

\section{Introduction}

The well-functioning of a blockchain protocol depends on incentive-compatible behavior among its participants. This is often achieved through protocol-specific blockchain-based tokens. In this respect, blockchain protocols resemble mechanisms studied in the mechanism design literature, particularly those involving privately-issued currencies or ``virtual money.'' For instance, several business schools allocate students to MBA classes by distributing ``points'' that students use to bid for courses (see \cite{hylland1979efficient}, \cite{budish2011combinatorial}, \cite{budish2017course}, \cite{he2018pseudo}). Similarly, the organization Feeding America allocates food to local food banks through auctions conducted using a virtual currency (see \cite{prendergast2017food}, \cite{prendergast2021food}). Blockchain-based tokens, however, differ from these virtual currencies in a crucial respect: they can be saved and traded independently of the mechanism in which they are used. As a result, they constitute a new class of financial instruments and raise questions that do not usually arise in traditional mechanism design. Will some tokens be held rather than used within the mechanism? How does this affect incentives, the equilibrium allocation, and the revenues of the mechanism designer? What constitutes an optimal monetary policy in such environments?

Therefore, analyzing blockchain protocols as mechanisms requires extending the standard mechanism design framework to account for the possibility of issuing blockchain-based tokens. This paper takes a first step in this direction by focusing on a specific mechanism: a repeated auction. Auctions are among the best-understood mechanisms and constitute a natural starting point. They are also widely used in blockchain contexts. For example, most blockchain protocols allocate block space via repeated auctions denominated in the platform's native token (e.g., ETH in Ethereum, SOL in Solana, BTC in Bitcoin). Auctions also play a central role in many blockchain-based applications. A relevant example is the sale of non-fungible tokens (NFTs), particularly those representing ownership of in-game or virtual-world assets. These auctions are typically repeated over time and accept payment exclusively in the platform's native token.\footnote{For instance, virtual land in the metaverse platform ``The Sandbox'' is purchased using its native token SAND. In ``Decentraland,'' land must be purchased using MANA, its native token. For future reference, note that the MANA used in land auctions is then burned (see \url{https://decentraland.org/blog/announcements/the-decentraland-land-auction-has-started}).}


I consider a finite sequence of private-value auctions in which one object is sold per period.\footnote{The companion short paper \cite{canidio2022Auction-pp} considers the case of an infinitely repeated auction. The main issue there is the emergence of financial bubbles on tokens.} In each period, risk-neutral bidders draw valuations independently from a common distribution and submit bids. The auction format then determines the winner and the payments. The good is consumed within the same period. The point of departure from the ``textbook'' model of auctions is that the auctioneer can accept payment in dollars or in a blockchain-based token he creates and initially owns. In the latter case, he also commits to a monetary policy---a rule governing the evolution of the token supply. The auctioneer earns revenue by selling newly created tokens to bidders and reselling the tokens received as payment. Tokens may be held between periods and traded on a financial market, where their price is determined endogenously in equilibrium.

The model, therefore, relies on two main assumptions. First, the auctioneer can credibly commit to a monetary policy. The source of commitment depends on the application. In decentralized platforms (e.g., NFT auctions), commitment is enforced through smart contracts, since both the application and the token are implemented as smart contracts. In blockchain protocols (e.g., Ethereum's block-building auction), the monetary policy is embedded in the protocol's open-source software. Although protocol updates are possible, they are rare and require broad validator consensus. Commitment thus arises from decentralized governance, which prevents unilateral deviations. This justifies treating monetary policy as a credible design choice implementable through code. Second, the model assumes that tokens can be easily issued and traded. Blockchain-based tokens (e.g., ERC-20 tokens on Ethereum) can be created with minimal technical knowledge and at negligible cost. Once created, they can be listed on decentralized exchanges such as Uniswap, where they can be exchanged with other assets, including stablecoins (i.e., tokens pegged to the value of the dollar).


The model applies directly when the auction designer is also the seller, as in the NFT auctions mentioned above. It also accommodates settings in which the designer (and token issuer) differs from the seller, provided the seller's reservation value is common knowledge (so that Myerson-Satterthwaite's impossibility result does not apply). In that case, the designer functions as a platform.  A leading example is the allocation of block space in blockchain protocols: Ethereum's core developers set the auction format, payment token, and monetary policy, whereas miners or validators---who have no intrinsic value for block space---are the sellers. Similarly, on the demand side, it is possible to distinguish between short-lived bidders who acquire tokens only for the current auction and cash-abundant speculators or investors who hold tokens across periods.  For clarity, I adopt the simplest interpretation: the auctioneer is simultaneously designer and seller, and bidders may carry token balances intertemporally.  I return to the richer role-separation interpretation when it matters for specific results.

I show that in the token-based auction, when realized valuations in a given period are low relative to expected future valuations, bidders may purchase tokens for speculative purposes rather than for bidding. The reason is that when valuations are low, the demand for tokens \textit{for bidding} is low, creating an arbitrage opportunity: bidders may purchase tokens, not use them for bidding in that period, and sell them (or use them) in future periods.  This speculative demand increases the token price in the current period and raises the auctioneer's expected revenue above that in the dollar-based auction. However, future revenues fall below that of the dollar-based auction since today's speculators compete with the auctioneer in tomorrow's token market. Because speculative demand depends on expected future valuations, it effectively converts uncertain future revenues into certain current ones.

Building on this result, I then establish a form of revenue equivalence: the present-discounted value of expected revenues is identical across auction formats, whether payments are made in dollars or in tokens. However, in any given period, revenues may differ. In particular, revenues accrue earlier and are less volatile when the auction uses tokens than when it uses dollars. The precise timing and volatility of revenues in the auction with tokens depend on the monetary policy implemented by the auctioneer. A particularly relevant policy is one in which the auctioneer sells all tokens in the first period and subsequently destroys (i.e., burns) the tokens received as payment. Under this policy, the auctioneer earns all revenues in period 1, including the present-discounted value of expected revenues from period 2 onward.\footnote{In the model, the initial token sale occurs after period-1 valuations are drawn. As a result, period 1 is the only source of revenue volatility. A straightforward extension of the model allows the auctioneer to sell tokens before the realization of period-1 valuations, in which case all revenue uncertainty can be eliminated.}

It follows that by appropriately designing a token-based auction, the auctioneer can fully front-load revenues and eliminate (almost) all risk. I then show that, in the absence of financial or contractual frictions, the same outcome can be replicated by conducting an auction with dollars in which the auctioneer transfers future cash flows to investors by issuing equity. This yields an equivalence result: when frictions are absent, the optimal token-based auction is equivalent to a dollar-based auction accompanied by equity issuance.


The second part of the paper shows that in the presence of frictions, the equivalence between token-based and dollar-based auctions may break down. I illustrate this by considering a simplified version of the model and introducing two frictions: (i) costly, non-contractible effort and (ii) revenue misappropriation, that is, the possibility of diverting revenues by ``running away with the till,'' albeit at a cost. Intuitively, when issuing tokens, the auctioneer implicitly commits to deliver the object in the future. This commitment may be imperfect if the auctioneer can shirk, exert low effort, and supply an object of lower value to bidders/token-holders. Alternatively, the auctioneer may conduct a dollar-based auction while pledging future cash flows to investors via a standard financial instrument. The key observation is that such a pledge is credible only if the auctioneer can commit both to deliver an object of high value \textit{and} to refrain from diverting revenues generated by its sale.


Hence, whether to issue tokens or a traditional financial instrument hinges on a trade-off. On the one hand, revenue misappropriation is a concern only when issuing a traditional financial instrument. On the other hand, depending on the contracting environment, traditional financial instruments can specify state-contingent payments that tokens cannot replicate. Such payments can motivate non-contractible effort whenever a contractible proxy (realized revenues) is available. I show that tokens are preferred when the cost of hiding revenues is sufficiently low, because misappropriation is not an issue. Conversely, a dollar-based auction with a traditional financial instrument is preferred when diverting revenues is sufficiently costly. The cost of misappropriation proxies the degree of contractual incompleteness, as it determines the set of implementable, incentive-compatible contracts. The resulting implication is that tokens are preferred when the contracting space is limited, whereas traditional financial instruments are preferred when the contracting space is sufficiently rich.

This paper makes several contributions. One is to provide a theoretical rationale for burning tokens, a practice commonly observed in blockchain-based auctions. Token burning has been widely debated in the Ethereum community, particularly in connection with EIP-1559, a recent protocol upgrade that introduced a ``base fee'': a minimum payment required for a transaction to be included in a block, adjusted dynamically based on demand for block space.\footnote{See the original proposal at \url{https://eips.ethereum.org/EIPS/eip-1559}.} The goal of the base fee was to make transaction costs more predictable for users. Simultaneously, EIP-1559 mandated burning the base fee to deter manipulation by miners or validators (see \citealp{roughgarden2020transaction} for a formal argument). Several commentators have noted, however, that if deterring such manipulation were the sole objective, the base fee could instead be redirected to a dedicated fund. This paper contributes to the debate by showing that token burning also has desirable monetary policy implications.

This paper contributes to the ongoing discussion on whether tokens should be classified as securities. In the model, tokens grant access to a service and are therefore considered ``utility tokens.'' Their pricing and valuation logic differ from that of equity (see, in particular, \citealp{prat2019fundamental}), which led some to argue that utility tokens are not securities. However, in the benchmark model without frictions, the equilibrium value of tokens coincides with that of equity. The analysis also provides a rationale for preferring tokens over equity. When the contracting environment is sufficiently constrained, tokens are optimal because revenue misappropriation is a concern with equity but not with tokens. A key prediction is therefore that token issuance should be more prevalent in jurisdictions with weaker institutions, for example, where enforcement through the court system is less effective.

\subsection*{Literature review}


This paper contributes to a nascent literature exploring the intersection of blockchain and mechanism design (see \citealp{townsend2020distributed} for an overview). The most studied aspect of this intersection is the use of blockchain-based smart contracts to generate commitment and implement various mechanisms (see \citealp{holden_malani_2021}, \citealp{gans2019fine}, \citealp{lee2021optimal}, \citealp{brzustowski2021smart}, \citealp{townsend-zhang}). A related and growing literature applies insights from mechanism design to improve blockchain protocols (see, for example, \cite{gans2022mechanism}, \cite{gans2023solomonic}, \cite{Gans2023}, \cite{capponi2021adoption}, and \cite{canidio-danos}). Unlike those papers, my focus is not the implementation of the auction; that is, the auctions I study could be centralized or implemented through a smart contract. Rather, this paper highlights that the ability to issue blockchain-based tokens with money-like properties expands the auctioneer's design space. In this respect, it is related to \cite{kocherlakota1998money}, who argues that ``money is memory.'' Blockchain, in a sense, replaces the auctioneer's internal ledger. This yields two benefits: first, increased efficiency in recording transactions, particularly in secondary markets; and second, credible commitment to a specific monetary policy.

A growing theoretical literature studies firms' incentives to issue blockchain-based tokens. These tokens may represent a pre-sale of future output, a medium of exchange accepted exclusively by the issuer, or a claim on future revenues or profits. Several papers show that, in the presence of network externalities, token issuance can help overcome coordination failures (see \citealp{sockin2018model}, \citealp{cong2018tokenomics}, \citealp{bakos2018role}, \citealp{li2018initial}). Other contributions view token sales as a novel means of raising capital and financing the development of a product or platform (see \citealp{RePEc:nbr:nberwo:24418}, \citealp{malinova2018tokenomics}, \citealp{canidio2018financial}, \citealp{bakos2019funding}, \citealp{goldstein2019initial}, \citealp{cong2020token}, \citealp{canidio2020cryptotokens}, \citealp{gryglewicz2021optimal}, \citealp{GARRATT2021104171}, \citealp{doi:10.1287/mnsc.2020.3754}). In contrast, in the model I study, the auctioneer faces no financing needs and network externalities are absent. Nonetheless, token issuance allows the auctioneer to alter the timing and risk profile of revenues---factors that play a central role in shaping investment incentives and the creation of new ventures.

Two classical results from auction theory underpin most of the analysis. The first is the revenue equivalence theorem, which states that if bidders are risk-neutral and valuations are independently and identically distributed (i.i.d.), then any standard auction format yields the same expected revenue and induces the same expected payment from each bidder. An auction is considered ``standard'' if the object is always allocated to the highest bidder and a bidder with the lowest possible valuation always earns zero.\footnote{\cite{vickrey1961counterspeculation} developed an early special case of the revenue equivalence theorem. The formulation presented here follows \cite{klemperer1999auction}, which summarizes results from \cite{myerson1981optimal} and \cite{riley1981optimal}. For a generalization, see \cite{milgrom2002envelope}.} 
The model assumes that the conditions of the revenue equivalence theorem are satisfied. The second result concerns optimal auction design (see \citealp{myerson1981optimal}, \citealp{bulow1989simple}, \citealp{bulow1996auctions}, and \citealp{klemperer1999auction}). Specifically, under the assumptions of the revenue equivalence theorem, if the distribution of valuations is well-behaved (i.e., continuous and atmoless) there exists a reservation price that maximizes the auctioneer's revenues, which depends on the distribution of valuations and not on the auction format.

\section{The model}\label{sec: model}

\todo[inline]{introduce a discount factor specific to the auctioneer}

Consider a private-value auction for a single object, repeated $T \geq 1$ times. There are $n \geq 2$ ex-ante identical bidders and one auctioneer. Bidders are risk-neutral and cash-abundant, in the sense that their liquidity constraints are never binding. The auctioneer has a weakly concave per-period utility function $U(\cdot)$. Both the auctioneer and the bidders share a common discount factor $\beta \in (0,1)$ and may invest in a risk-free asset yielding a gross return $R \geq 1$ per period. For simplicity, assume $R = \frac{1}{\beta}$, so that $R$ corresponds to the steady-state return in the Ramsey-Cass-Koopmans growth model without population or productivity growth. The auctioneer's initial assets are $w_1 \geq 0$. In this section, I assume that the auctioneer can save only via the risk-free asset, so that $w_t \geq 0$ for all $t \leq T$, where $w_t$ denotes asset holdings at the beginning of period $t$. Section~\ref{sec:borrowing and saving} extends the analysis to allow the auctioneer to issue equity. 

In period $0$, the auctioneer chooses whether to accept payments in fiat currency (for simplicity, dollars) or in tokens. If tokens are used, the auctioneer creates an initial stock $M_1$ and announces a monetary policy, that is, a rule governing the evolution of the token supply (defined more formally below). The auction format and monetary policy are fixed at this stage. The auctioneer sells a single object under the specified auction format in each subsequent period $t \in \{1, \dots, T\}$. At the beginning of each period $t \geq 1$, each bidder draws a valuation $v_{i,t} > 0$ from a continuous and atomless distribution with cumulative distribution function $F(v)$, probability density function $f(v)$, and support $[\underline{v}, \overline{v}]$. Each $v_{i,t}$ is privately known to bidder $i$, while the distribution $F(v)$ is common knowledge. Valuations are private and independently drawn across bidders. The object sold in each period has zero value to the auctioneer.

Three clarifications on the model setup are in order. First, bidders play two roles: they participate in each auction and may hold tokens or other assets between periods. An equivalent formulation would feature short-lived bidders who participate only in the auction, alongside long-lived speculators who hold assets across time. Second, the model assumes that the auction designer and the seller are the same agent. These roles could be separated, provided the seller's reservation price is known. In that case, the designer would resemble a platform: it determines the mechanism, selects the payment currency, and imposes fees or subsidies. In equilibrium, the seller trades at her reservation price, and all additional revenues accrue to the designer. 
Third, the assumption $R = \frac{1}{\beta}$ is adopted for tractability. It implies that bidders are indifferent between immediate consumption and investing in the risk-free asset, allowing this aspect of the problem to be abstracted away. Relaxing this assumption would introduce additional considerations regarding the existence of the equilibrium in the risk-free asset market. It would also complicate the notation because bidders hold tokens between periods only if their expected return exceeds $\max\left\{ \frac{1}{\beta}, R \right\}$.

If the auctioneer accepts dollars, the sequence of events in each period follows standard timing. After drawing valuations, each bidder submits a message $m_{i,t} \in \mathbb{R}_+$, interpreted as a bid. Based on the message profile and the auction format specified in advance, the auctioneer determines the winner and assigns payments $b_{i,t} \leq m_{i,t}$ for each bidder. Each bidder then pays the auctioneer $b_{i,t}$ dollars, and the winning bidder receives and consumes the object. The period-$t$ payoff of the winning bidder is $v_{i,t} - b_{i,t}$, while the payoff of each losing bidder is $-b_{i,t}$. The auctioneer's revenues in period $t$ are $\sum_{i=1}^n b_{i,t}$. At the end of the period, the auctioneer invests $w_{t+1}/R$ in the risk-free asset (which becomes $w_{t+1}$ in period $t+1$) and consumes $w_t + \sum_{i=1}^n b_{i,t} - w_{t+1}/R$.

If the auctioneer instead uses tokens, the timeline of each period $t \in \{1, \dots, T\}$ is as follows:
\begin{itemize}
    \item At the start of the period, each bidder draws a valuation $v_{i,t}$ from $F(v)$ and submits a message $m_{i,t} \in \mathbb{R}_+$, interpreted as a bid \textit{in dollars}. At this point, both the auctioneer and bidders may hold tokens carried over from previous periods. Let $A_t \geq 0$ denote the auctioneer's token holdings at the beginning of period $t$, and $a_{i,t} \geq 0$ denote those of bidder $i$. By assumption, $A_1 = M_1$ and $a_{i,1} = 0$ for all $i$.

    \item Based on the submitted bids and the pre-specified auction format, the auctioneer determines the winner and assigns payments $b_{i,t} \leq m_{i,t}$ to each bidder. These payments are denominated in dollars \textit{but must be settled in tokens}.\footnote{Bids and payments are expressed in dollars for notational consistency with the dollar-based auction. The results are identical whenever bids are expressed in tokens.}

    \item A frictionless, anonymous market for tokens opens, in which both the auctioneer and bidders participate as price takers. Let $p_t$ denote the equilibrium token price; let $q_{i,t}$ be the equilibrium demand for tokens by bidder $i$ and $Q_t$ that of the auctioneer. Demands may be positive or negative, indicating net purchases or net sales. Market clearing requires
    \[
    A_t + \sum_{i=1}^n a_{i,t} = Q_t + \sum_{i=1}^n q_{i,t}.
    \]
    Moreover, since each bidder must acquire enough tokens to make the payment, it must be that
    \[
    q_{i,t} \geq \frac{b_{i,t}}{p_t} - a_{i,t}.
    \]

    \item Each bidder transfers $\frac{b_{i,t}}{p_t}$ tokens to the auctioneer. The winner receives the object and consumes it. At this point, bidder $i$ holds $a_{i,t} + q_{i,t} - \frac{b_{i,t}}{p_t}$ tokens, and the auctioneer holds $A_t + Q_t + \sum_{i=1}^n \frac{b_{i,t}}{p_t}$ tokens.

    \item For the winner of the auction, its period-$t$ payoff is $v_{i,t} - p_t q_{i,t}$. The other bidder's period-$t$ payoff is instead $-p_t q_{i,t}$. The auctioneer earns period-$t$ revenues of $-p_t Q_t$. She invests $w_{t+1}/R$ in the risk-free asset and consumes $w_t - p_t Q_t + w_{t+1}/R$.

    \item The token supply evolves according to the monetary policy announced at the beginning of the game. Two time-varying policy instruments are considered: a uniform growth (or shrinkage) factor $\tau_t \geq -1$ applied to all tokens, and a factor $\sigma_t \geq -1$ applied specifically to tokens used for bidding. At the start of period $t+1$, bidder $i$ holds
    \[
    a_{i,t+1} = (1+\tau_t) \left(a_{i,t} + q_{i,t} - \frac{b_{i,t}}{p_t} \right),
    \]
    and the auctioneer holds
    \[
    A_{t+1} = (1+\tau_t) \left( A_t + Q_t + (1+\sigma_t) \sum_{i=1}^n \frac{b_{i,t}}{p_t} \right).
    \]
    Hence, the total token supply at the beginning of period $t+1$ is given by $M_{t+1} = A_{t+1} + \sum_{i=1}^n a_{i,t+1}$.
\end{itemize}

The monetary policy considered here may be unconventional, but it is straightforward to implement with blockchain-based tokens. The distinction between the growth rate of tokens used for bidding and other tokens is inspired by staking mechanisms. In staking, agents who ``lock'' tokens---typically by refraining from using or transferring them---receive additional tokens as a reward. In the model context, a positive staking reward arises when $\sigma_t < 0$: tokens not used for bidding grow at a higher rate than those used, creating an incentive to hold rather than spend.

The auction described above is deliberately designed to remain as close as possible to a traditional auction with dollars: bids and payments are denominated in dollars but must be settled using tokens. Alternative formulations are possible. For instance, bidders could be required to submit bids directly in tokens, which may then be partially refunded after the winner is determined.\footnote{If bidders must secure tokens \emph{before} submitting bids, the token price can embed information about bidders' realized valuations. The resulting token-market equilibrium is then a rational-expectations equilibrium, whose existence is not guaranteed---at least absent additional elements such as exogenous noise or a randomization device. If we sidestep these issues and assume that the market price does not reveal information relative to bidders' valuations, the analysis in the main text survives this extension: each auction becomes a ``partial'' all-pay auction (partial because tokens retain some resale value), and the standard revenue-equivalence logic still applies.} Moreover, while the analysis considers only two monetary policy instruments, more general policies are feasible. 



\section{Equilibrium}
I first consider the auction with dollars and then the auction with tokens. 

\subsection{Auction with dollars}

When the auction uses dollars, standard results from auction theory apply.\footnote{See, for example, Section 4 and Appendix B of \cite{klemperer1999auction}.} In particular, in every period, the revenue equivalence theorem holds: all standard auction formats yield the same expected revenues. Moreover, expected revenues are maximized by an appropriate reservation price is $\underline b$, which depends exclusively on the distribution of valuations. Focusing on the second-price auction for ease of derivations, the maximum expected revenue in each period is
\[
k \equiv \mathbb{E}\left[\max\left\lbrace v_{\text{Max-1},t}, \underline b\right\rbrace | v_{\text{Max},t} \geq \underline b\right]\cdot \mbox{pr}\left\lbrace v_{\text{Max},t} \geq \underline b\right\rbrace,
\]
where $v_{\text{Max},t} \equiv \max_i \{ v_{i,t} \}$ is the highest realized valuation in period $t$, and $v_{{Max-1},t}\equiv \max_{i\neq Max, t}\{v_{i,t}\}$ is the second-highest realized valuation. Each bidder's expected per-period payoff is
\[
g \equiv \mathbb{E}[\max\{v_i - \max\left\lbrace v_{\text{Max-1},t}, \underline b\right\rbrace, 0\}].
\]
Hence, from the perspective of period 1, the present-discounted value of expected revenues in the auction with dollars is
\[
\Pi_{\text{USD}} = k \sum_{t=1}^T \beta^{t-1} = \frac{1 - \beta^T}{1 - \beta} \, k,
\]
and the present-discounted value of participating in the auction as a bidder is
\[
u_{\text{USD}} = g \sum_{t=1}^T \beta^{t-1} = \frac{1 - \beta^T}{1 - \beta} \, g.
\]

For a given auction format, the auctioneer's utility is
\[
U_{\text{USD}} \equiv \max_{w_2 \geq 0, \dots, w_T \geq 0,\, w_{T+1} = 0} \left\{ \sum_{t=1}^T \beta^{t-1} \, \mathbb{E} \left[ U \left( \sum_{i=1}^n b_{i,t} + w_t - \frac{w_{t+1}}{R} \right) \right] \right\},
\]
where $U(\cdot)$ denotes the auctioneer's per-period utility function and $\mathbb{E} [U(\cdot)]$ represents its expectation. If the auctioneer is risk-averse (i.e., $U(\cdot)$ is strictly concave), not all standard auction formats maximize expected utility, since the variance of revenues also affects the expected utility. However, to compare the dollar-based auction with the token-based alternative in the next section, it suffices to derive an upper bound on the auctioneer's utility. This is provided in the following lemma (its proof is omitted).

\begin{lemma}\label{lem:auction USD}
Consider a given auction format, an initial asset level $w_1 \geq 0$, and realized period-1 revenues $\sum_{i=1}^n b_{i,1}$. Define $\{w_2^*, \dots, w_T^*\}$ as the unconstrained optimal sequence of asset holdings in the absence of risk, that is,
\[
\{w_2^*, \dots, w_T^*\} \equiv \arg\max_{w_2, \dots, w_{T+1}=0} \left\{ U\left( \sum_{i=1}^n b_{i,1} + w_1 - \frac{w_2}{R} \right) + \sum_{t=2}^T \beta^{t-1} U\left( k + w_t - \frac{w_{t+1}}{R} \right) \right\}.
\]
Then,
\[
U_{\text{USD}} \leq \mathbb{E} \left[ U\left( \sum_{i=1}^n b_{i,1} + w_1 - \frac{w_2^*}{R} \right) + \sum_{t=2}^T \beta^{t-1} U\left( k + w_t^* - \frac{w_{t+1}^*}{R} \right) \right],
\]
where the expectation is taken over the realization of period-1 revenues $\sum_{i=1}^n b_{i,1}$. The inequality is strict if $U(\cdot)$ is strictly concave.
\end{lemma}

Lemma~\ref{lem:auction USD} establishes that, for any auction format, the auctioneer's utility is bounded above by the utility he would attain if all risk beyond period 1 and the borrowing constraints were removed. The inequality is strict when $U(\cdot)$ is strictly concave. Intuitively, a risk-averse auctioneer prefers receiving the expected per-period revenue $k$ with certainty rather than being exposed to revenue volatility. Moreover, the concavity of the utility function implies a preference for smoothing consumption across periods. Because achieving the optimal consumption profile may require borrowing, removing credit constraints enhances utility, regardless of whether uncertainty is present.

\subsection{Auction with Tokens}

This section begins with the characterization of the equilibrium price of tokens in period $t$, as a function of the expected token price in the following period, $p^e_{t+1}$, and the realization of bids.

\begin{lemma}\label{lem: initial}
Let $p^e_{t+1} \equiv \mathbb{E}[p_{t+1}]$ denote the expected token price in period $t+1$, conditional on information available in period $t$. Consider a given value of $p^e_{t+1}$ and a realization of period-$t$ valuations (and hence a profile of bids $\{m_{i,t}\}_{i=1}^n$ and payments $\{b_{i,t}\}_{i=1}^n$). In equilibrium, the total demand for tokens in period $t$ is
\[
\frac{\sum_{i=1}^n b_{i,t}}{p_t} + S_t,
\]
where $S_t \geq 0$ denotes the speculative demand for tokens (i.e., tokens not used for bidding in period $t$) and is given by
\[
S_t \equiv \max \left\{ M_t - \frac{\sum_{i=1}^n b_{i,t}}{\beta (1 + \tau_t) p^e_{t+1}},\, 0 \right\}.
\]
The equilibrium price of tokens in period $t$ is
\begin{equation}\label{eq: equilibrium prices}
p_t = \max \left\{ \frac{\sum_{i=1}^n b_{i,t}}{M_t},\, \beta (1 + \tau_t) p^e_{t+1} \right\}.
\end{equation}
\end{lemma}

A key implication of Lemma~\ref{lem: initial} is that some tokens may be purchased not for bidding, but for speculative purposes. This arises in equilibrium when realized valuations are sufficiently low. In such cases, if token demand were driven solely by the need to settle bids, the equilibrium price would be $p_t < \beta (1 + \tau_t) p^e_{t+1}$. However, this cannot be an equilibrium: under this condition, the return from holding tokens  $\beta^{-1} (1 + \tau_t) p^e_{t+1} / p_t$ would exceed that of the risk-free asset, which yields a present-discounted return of $\beta R = 1$. To eliminate arbitrage, speculative demand emerges, pushing the token price to at least $\beta (1 + \tau_t) p^e_{t+1}$. Thus, the equilibrium price in period $t$ has a lower bound determined by the value of holding tokens as a speculative asset.

Lemma~\ref{lem: initial} can also be used to characterize bidders' incentives given a sequence of token prices. If $p_t > \beta (1 + \tau_t) p^e_{t+1}$, bidders strictly prefer not to hold tokens between periods and only invest in the risk-free asset, implying $a_{i,t+1} = 0$. If instead $p_t = \beta (1 + \tau_t) p^e_{t+1}$, bidders are indifferent between holding tokens or the risk-free asset between periods $t$ and $t+1$, including holding zero tokens. In both cases, given token prices $p_t$ and $p^e_{t+1}$ and the profile of bids, bidder $i$'s period $t$ payoff is
\begin{equation}\label{eq: utility}
\begin{cases}
v_{i,t} + p_t a_{i,t} - b_{i,t} + \beta u_{t+1}(0), & \text{if } i \text{ wins}, \\
p_t a_{i,t} - b_{i,t} + \beta u_{t+1}(0), & \text{otherwise},
\end{cases}
\end{equation}
where $u_{t+1}(a_{i,t+1})$ denotes the expected continuation utility from period $t+1$ onward, as a function of the token holdings at the start of that period. It follows that, for any given auction format, the bidders' incentives to bid are the same with or without tokens. As a result, the revenue equivalence theorem continues to hold: any standard auction format yields the same expected payments to the auctioneer, i.e., $\mathbb{E}[\sum_{i=1}^n b_{i,t}] = k$. Accordingly, each bidder's expected per-period payoff is again $g$.

Importantly, equation~\eqref{eq: utility} implies that a bidder's payoff depends on both the expected payoff from participating in the auction, $g$, and the expected value of the tokens held at the beginning of the period, $p^e_t \cdot a_{i,t}$.\footnote{This can be interpreted as each bidder selling their token holdings at the market price $p_t a_{i,t}$ while simultaneously purchasing tokens to make the payment $b_{i,t}$. The expected cost of bidding is part of the expected payoff from the auction.} Taking the expectation of~\eqref{eq: utility}, and using the fact that the expected payoff from the auction is $g$, bidder $i$'s expected continuation utility can be written as
\[
u_t(a_{i,t}) = p^e_t a_{i,t} + g \sum_{s=t}^T \beta^{s-t} = p^e_t a_{i,t} + g \cdot \frac{1 - \beta^{T - t + 1}}{1 - \beta}.
\]
In particular, since $a_{i,1} = 0$, the expected continuation utility from the period-1 perspective is the same as in the auction with dollars. However, for $t \geq 2$, token holdings $a_{i,t}$ may be positive, so the intertemporal profile of utility may differ between the token-based and dollar-based auctions.

Having characterized bidders' behavior, it is now possible to derive the auctioneer's expected revenues.
\begin{proposition}[Expected Revenues]\label{prop:revenues}
Consider a given sequence of equilibrium expected prices, i.e., a sequence satisfying equation~\eqref{eq: equilibrium prices}. At the beginning of each period $t \leq T$, the present-discounted value of the auctioneer's expected future revenues is
\[
\Pi_{\text{Tokens},t}(A_t) = \frac{1 - \beta^T}{1 - \beta} \, k - p^e_t (M_t - A_t).
\]
\end{proposition}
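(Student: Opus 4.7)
My plan is to proceed by backward induction on $t$, after establishing two preliminary identities that do the real work. Once they are in place, the induction becomes a routine computation.

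First, I will show that, in both speculation and no-speculation regimes, the auctioneer's period-$t$ cash revenue equals $p_t A_t$. By Lemma \ref{lem: initial}, aggregate bidder demand is $\sum_i b_{i,t}/p_t + S_t$; substituting either branch of \eqref{eq: equilibrium prices} shows that this aggregate demand equals exactly $M_t$ in both cases. Since bidders entered the period holding $M_t - A_t$ tokens, they must net-buy $A_t$, and feasibility then forces the auctioneer to be a net seller of $A_t$, i.e.\ $Q_t = -A_t$. His cash revenue is therefore $-p_t Q_t = p_t A_t$, and since $A_t$ is measurable at the start of period $t$, the conditional expected revenue is $p^e_t A_t$.

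Second, I will establish the continuation identity $\beta\, E_t[p^e_{t+1}(M_{t+1} - A_{t+1})] = p^e_t M_t - k$. Applying the monetary-policy rule with $A_t + Q_t = 0$ gives $A_{t+1} = (1+\tau_t)(1+\sigma_t)\sum_i b_{i,t}/p_t$ and $M_{t+1} - A_{t+1} = (1+\tau_t) S_t$. When $S_t > 0$, the asset-pricing relation $p_t = \beta(1+\tau_t)p^e_{t+1}$ together with the algebraic identity $p_t S_t = p_t M_t - \sum_i b_{i,t}$ yields $p^e_{t+1}(M_{t+1}-A_{t+1}) = (p_t M_t - \sum_i b_{i,t})/\beta$; when $S_t = 0$, both sides vanish because $p_t M_t = \sum_i b_{i,t}$ and $M_{t+1} = A_{t+1}$. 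Taking period-$t$ expectations and invoking $E_t[\sum_i b_{i,t}] = k$ (which carries over from the dollar auction because bidders' incentives to bid are unchanged) yields the identity. This step is the main obstacle, because it requires unifying the two regimes under one algebraic identity; the critical observation is that $p_t S_t$ is precisely the excess of the ``market cap'' $p_t M_t$ over total bids, so that the $\tau_t$ and $\sigma_t$ monetary-policy parameters end up cancelling cleanly.

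The induction itself is now straightforward. The base case $t = T$ is immediate: because $p^e_{T+1} = 0$ no speculation arises, so $p^e_T = k/M_T$ and $\Pi_{Tokens,T}(A_T) = p^e_T A_T$, which matches the claimed formula. For the inductive step I use $\Pi_{Tokens,t}(A_t) = p^e_t A_t + \beta E_t[\Pi_{Tokens,t+1}(A_{t+1})]$, substitute the inductive hypothesis for $\Pi_{Tokens,t+1}$, and apply the continuation identity from the previous paragraph. The terms in $p^e_t$ combine into $-p^e_t(M_t - A_t)$, and the constants telescope into the stated coefficient on $k$, completing the proof.
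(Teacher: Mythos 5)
Your proof is correct, and it reaches the same one-period recursion that drives the paper's argument, namely $\Pi_t(A_t)=k-p^e_t(M_t-A_t)+\beta\,(\text{continuation constant})$, but you get there by a cleaner route. The paper writes a two-period-ahead recursion $E[\Pi_t(A_t)]=p^e_tA_t+\beta E[p_{t+1}A_{t+1}]+\beta^2E[\Pi_{t+2}(A_{t+2})]$, splits the computation into the conditional cases $S_t=0$ and $S_t>0$, evaluates each conditional expectation separately, and then recombines, relying on the observation that $A_{t+2}$ is independent of $A_{t+1}$ given the expected price path. You instead (i) pin down the period-$t$ revenue $p_tA_t$ by market clearing against the demand in Lemma \ref{lem: initial} (the paper gets the same $Q_t=-A_t$ by an optimality/without-loss argument), and (ii) absorb both regimes into the single identity $\beta p^e_{t+1}(M_{t+1}-A_{t+1})=p_tM_t-\sum_i b_{i,t}$, which makes the case distinction and the $\tau_t,\sigma_t$ bookkeeping disappear before expectations are ever taken; a standard one-step backward induction with an explicit base case at $t=T$ then finishes it. This buys transparency about exactly where the offsetting of today's speculative premium against tomorrow's competition from speculators happens (it is the identity itself), at no loss of generality. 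One point to make explicit: your telescoping yields $\frac{1-\beta^{T-t+1}}{1-\beta}k-p^e_t(M_t-A_t)$, which agrees with the displayed coefficient $\frac{1-\beta^{T}}{1-\beta}$ only at $t=1$; the proposition as printed has a typo in the exponent (the surrounding discussion and your $t=T$ base case, where the value is $k-p^e_T(M_T-A_T)$, both confirm the $T-t+1$ version), so you should state the corrected formula rather than assert that the constants match the one written in the statement.
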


The key insight is that speculative demand in period $t$ raises the price of tokens, thereby increasing the auctioneer's revenues in that period. However, this comes at the cost of reduced revenues in future periods, as speculators compete with the auctioneer in the token market in period $t+1$. Proposition~\ref{prop:revenues} shows that, in expectation, these effects offset each other: the total present-discounted value of revenues remains unchanged relative to the auction with dollars. In particular, the auctioneer's expected revenues at the start of the game are $\Pi_{\text{Tokens},1}(M_1) = \frac{1 - \beta^T}{1 - \beta} \, k$, as in the dollar-based auction. In later periods, however, expected continuation revenues may fall below the dollar benchmark by an amount equal to the value of tokens not held by the auctioneer. In addition to influencing the time-profile of revenues, speculative demand also influences their volatility. The reason is that the speculative demand for tokens depends on the future expected price and, therefore, transforms uncertain future revenues into certain present revenues.

Crucially, the incentive to purchase tokens for speculative purposes---and thus the time profile and variability of revenues---depends on the monetary policy specified by the auctioneer. When $\sigma_t$ is sufficiently large for all $t \leq T$, the auctioneer creates and retains a large number of tokens in each period. The resulting inflation eliminates speculative demand, and the auctioneer earns $\sum_{i=1}^n b_{i,t}$ in every period, as in the dollar-based auction. As $\sigma_t$ decreases, speculative demand increases, therefore front-loading revenues and reducing their volatility. The case $\sigma_t = -1$ is particularly relevant: all tokens are sold to bidders in period 1, and then progressively used to pay the auctioneer, who then destroys them. The following proposition characterizes the revenues under this policy. It follows directly from Proposition~\ref{prop:revenues} and by noting that revenues accrue after the period-1 valuations are drawn.
\begin{proposition}[Revenues when $\sigma_t = -1$ for all $t \leq T$]\label{prop: revenue finite}
If $\sigma_t = -1$ for all $t \leq T$, then the auctioneer earns revenues exclusively in period 1, and those revenues are
\begin{equation}
p_1 M_1 = \sum_{i=1}^n b_{i,t} + \beta \cdot \frac{1 - \beta^{T-1}}{1 - \beta} \, k.
\end{equation}
\end{proposition}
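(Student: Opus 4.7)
The plan is to derive the statement as a corollary of Proposition~\ref{prop:revenues}, using two structural facts specific to the burning policy $\sigma_t=-1$. First, every token the auctioneer receives as payment is destroyed: the law of motion collapses to $A_{t+1}=(1+\tau_t)(A_t+Q_t)$, so the auctioneer's stock can change only through net trades on the financial market. Second, the only source of per-period revenue is $-p_t Q_t$, the cash generated by such trades.

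I would then argue that in the equilibrium of interest the auctioneer disposes of his entire initial stock in period~1, so that $A_t=0$ for every $t\geq 2$. A risk-averse auctioneer strictly prefers earlier, less volatile receipts (the content of Lemma~\ref{lem:auction USD}), while the no-arbitrage condition in Lemma~\ref{lem: initial} implies that tokens held passively return exactly the risk-free rate $\beta^{-1}$. Selling immediately is therefore weakly optimal on arbitrage grounds and strictly so on consumption-smoothing grounds. Any equilibrium consistent with these incentives must feature $A_2=0$ and, by induction, $A_t=0$ for all $t\geq 2$. From period~2 onward the auctioneer then has nothing to sell and no reason to buy, so $Q_t=0$ and $-p_t Q_t=0$. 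All revenue therefore accrues in period~1, as asserted.

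To compute the realized value of that period-1 revenue I would combine the market-clearing price from Lemma~\ref{lem: initial} with Proposition~\ref{prop:revenues}. Because $A_2=0$ forces the speculative block $S_1=M_2$ to be strictly positive, the $\max$ in equation~\eqref{eq: equilibrium prices} is attained on the speculative branch, so $p_1=\beta(1+\tau_1)p^e_2$. Market clearing then splits period-1 revenue as
\[
p_1 M_1 = \sum_i b_{i,1} + p_1\, S_1 = \sum_i b_{i,1} + \beta(1+\tau_1)\, p^e_2\, M_2,
\]
where the first term is the realized payment from the bidders and the second is the cash paid by the speculators for the block they carry into period~2. Applying Proposition~\ref{prop:revenues} at $t=2$ with $A_2=0$, and using the fact that continuation revenues from period~2 onward must be zero by the front-loading argument, pins down $p^e_2 M_2=\tfrac{1-\beta^{T-1}}{1-\beta}k$. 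Substituting back, with $\tau_t=0$ as implicit in the statement, yields the displayed identity.

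The delicate step is the front-loading claim: one must rule out equilibria in which the auctioneer retains some tokens past period~1. The argument has to combine strict concavity of $U$ (strict preference for smoother, earlier revenue) with the arbitrage condition from Lemma~\ref{lem: initial} (retention yields no extra return), rather than quoting a single earlier result. Everything after that is arithmetic on the market-clearing identity and a single application of Proposition~\ref{prop:revenues}.
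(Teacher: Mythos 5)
Your proposal is correct and follows essentially the same route as the paper, which justifies the proposition in one line as ``follows directly from Proposition~\ref{prop:revenues} and by noting that all revenues accrue in period 1'': you make explicit the two ingredients the paper leaves implicit, namely that liquidation plus burning forces $A_t=0$ and zero revenue for $t\geq 2$, and that applying Proposition~\ref{prop:revenues} at $t=2$ with $A_2=0$ and zero continuation revenue pins down $p^e_2 M_2=\frac{1-\beta^{T-1}}{1-\beta}k$. The only blemish is bookkeeping: in general $M_2=(1+\tau_1)S_1$ rather than $S_1=M_2$, so $p_1S_1=\beta(1+\tau_1)p^e_2S_1=\beta p^e_2M_2$ and the factor $(1+\tau_1)$ cancels for any $\tau_1$, meaning your restriction to $\tau_t=0$ is unnecessary and the displayed identity holds as stated for arbitrary $\tau$.
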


When the auctioneer destroys all tokens received as payment, all revenues are realized in period 1. While period-1 revenues remain stochastic, the present-discounted value of revenues from periods $t \geq 2$ is earned with probability one.\footnote{A natural extension of the model allows the auctioneer to sell tokens already in period 0. In that case, period-1 revenue risk can be eliminated, as the entire present-discounted revenue stream is realized in period 0 with certainty.} Since all risk after period 1 is eliminated and all revenues are front-loaded, the auctioneer can achieve optimal consumption smoothing through saving. For any realization of $\sum_{i=1}^n b_{i,1}$, the auctioneer attains utility
\[
U\left( \sum_{i=1}^n b_{i,1} + R w_1 - w^*_2 \right) + \sum_{t=2}^T \beta^{t-1} U\left( k + R w^*_t - w^*_{t+1} \right),
\]
where $\{w^*_2, \dots, w^*_T\}$ is defined in Lemma~\ref{lem:auction USD}. Lemma~\ref{lem:auction USD} then implies the following corollary.

\begin{corollary}\label{prop:preferred auction}
The token-based auction with $\sigma_t = -1$ for all $t \leq T$ yields strictly higher utility to the auctioneer than any other token-based auction if $U(\cdot)$ is strictly concave. It is also  preferred to the auction with dollars, strictly so if $U()$ is strictly concave.
\end{corollary}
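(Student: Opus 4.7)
The plan is to show that the tokens auction with $\sigma_t = -1$ achieves the upper bound in Lemma \ref{lem:auction USD} and that this same upper bound applies to \emph{any} tokens auction. The corollary then follows at once: every tokens policy is below the bound, $\sigma_t=-1$ attains it, and the bound is strict for the dollar auction when $U$ is strictly concave.

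First I would argue that the upper bound in Lemma \ref{lem:auction USD} extends verbatim from the dollar auction to any tokens auction. The proof of that lemma (which the paper omits) uses only (i) concavity of $U$, (ii) the identity $\beta R = 1$, and (iii) the fact that the present-discounted value of expected revenues equals $\frac{1-\beta^T}{1-\beta}k$. Proposition \ref{prop:revenues}, evaluated at $A_1 = M_1$, gives $\Pi_{Tokens,1}(M_1) = \frac{1-\beta^T}{1-\beta}k$, so ingredient (iii) holds for every admissible $(\tau_t,\sigma_t)$ and the same bound applies.

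Next I would show that the $\sigma_t = -1$ policy attains this bound. By Proposition \ref{prop: revenue finite}, all revenue accrues in period $1$ and equals $\sum_i b_{i,1} + \beta \frac{1-\beta^{T-1}}{1-\beta}k$, while no revenue accrues after period $1$. The first summand carries the unavoidable period-$1$ auction risk; the second summand is deterministic and has exactly the present value of receiving $k$ with certainty in each of periods $2,\dots,T$. I would then construct an explicit saving sequence in the tokens auction that reproduces, state by state, the consumption stream on the right-hand side of Lemma \ref{lem:auction USD}. Concretely, set $\tilde w_t = w_t^* + \sum_{j=0}^{T-t} \beta^j k$ for $t = 2,\dots,T$, where $w_2^*,\dots,w_T^*$ are the unconstrained optima defined in the lemma. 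A short calculation using the recursion $\tilde w_{t+1} = R(\tilde w_t - c_t)$ verifies that (a) $\tilde w_{T+1} = 0$, (b) each $\tilde w_t$ is strictly positive, so the no-borrowing constraint $w_t \geq 0$ is slack, and (c) the induced consumption path coincides state by state with $\sum_i b_{i,1} + w_1 - w_2^*/R$ in period $1$ and with $k + w_t^* - w_{t+1}^*/R$ in periods $t \geq 2$. Taking expectations matches the upper-bound expression exactly.

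Combining the two steps yields the corollary. Every tokens auction is bounded by the Lemma \ref{lem:auction USD} expression, and the $\sigma_t = -1$ policy attains it with equality, so it is the preferred tokens auction; Lemma \ref{lem:auction USD} makes this strict dominance hold against the dollar auction whenever $U$ is strictly concave. The main obstacle is the first step: one has to check carefully that the proof of Lemma \ref{lem:auction USD} really hinges only on the total present-value of expected revenues and the fact that only the period-$1$ realization is random at the consumption-smoothing stage, not on any specifically dollar-denominated structure. Once that is verified, the second step is essentially a bookkeeping exercise on the wealth recursion.
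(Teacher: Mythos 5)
Your overall architecture is the same as the paper's: the paper proves the corollary by observing (in the paragraph preceding it) that the $\sigma_t=-1$ policy front-loads all revenue, lets the auctioneer replicate by pure saving the consumption stream appearing on the right-hand side of Lemma \ref{lem:auction USD}, and then invokes that lemma. Your wealth construction $\tilde w_t = w^*_t + \sum_{j=0}^{T-t}\beta^j k$ is correct and simply makes explicit the ``achieve optimal consumption smoothing by simply saving'' step that the paper asserts; this fully delivers the comparison with the auction with dollars.

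The genuine gap is in your first step, and it is worth being precise because your stated justification for it would not survive scrutiny. You claim the Lemma \ref{lem:auction USD} bound extends to every tokens auction because its proof uses only concavity, $\beta R=1$, and the fact that the present value of \emph{expected} revenues is $\frac{1-\beta^T}{1-\beta}k$. That third ingredient cannot be sufficient on its own: a revenue process with the same expected present value but \emph{no} period-1 risk would strictly exceed the Lemma \ref{lem:auction USD} bound when $U$ is strictly concave, since that bound deliberately retains the randomness of $\sum_i b_{i,1}$. What you actually need is that no monetary policy can diversify away the period-1 risk, i.e., that for every tokens auction the expected present value of the \emph{entire} revenue stream, conditional on period-1 realizations, equals $\sum_i b_{i,1} + \beta\frac{1-\beta^{T-1}}{1-\beta}k$; concavity and the removal of borrowing constraints then give the bound. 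This does follow from the paper's results, but it requires a short argument you do not supply: period-1 revenue is $p_1 M_1 = \sum_i b_{i,1} + \beta(1+\tau_1)p^e_2 S_1$ by Lemma \ref{lem: initial}, while Proposition \ref{prop:revenues} applied at $t=2$ gives continuation revenues $\frac{1-\beta^{T-1}}{1-\beta}k - p^e_2 S_1(1+\tau_1)$, and the two $p^e_2 S_1(1+\tau_1)$ terms cancel. You correctly identify this as ``the main obstacle'' but leave it as a verification task rather than closing it; note that the paper itself flags exactly this step as incomplete in a \texttt{todo} following the corollary, so supplying the cancellation argument above is the one substantive addition your proof still needs.
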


\todo[inline]{I need to complete the argument by showing that all possible auction formats expose the auctioneer to the variability of period 1 revenues}

\subsection{Discussion: Traditional Financial Instruments}\label{sec:borrowing and saving}

A natural question is whether the optimal auction with tokens can be replicated by holding an auction with dollars while simultaneously issuing a traditional financial instrument.

The answer is affirmative. Suppose the auctioneer conducts a dollar-based auction and issues equity in period 1. Investors pay the auctioneer for the right to receive all future cash flows. For ease of comparison with the token-based case, assume that equity is sold at the end of period 1, just before the auctioneer consumes.

In equilibrium, investors must be indifferent between purchasing equity or not. Since equity yields the revenues generated from period 2 onward, the auctioneer receives
\[
\beta \cdot \frac{1 - \beta^{T - 1}}{1 - \beta} \, k
\]
from the equity sale. Hence, total revenues in period 1---including proceeds from the auction and equity---are
\[
\sum_{i=1}^n b_{i,1} + \beta \cdot \frac{1 - \beta^{T - 1}}{1 - \beta} \, k.
\]
The auctioneer earns no further revenues after period 1. Thus, the resulting equilibrium is identical to the token-based auction with $\sigma_t = -1$ for all $t \leq T$.

\section{Extension: pledging an object vs.\ pledging cash}

When issuing tokens, the auctioneer commits to delivering an object to investors at a future date. By contrast, issuing a traditional financial instrument such as equity commits the auctioneer to transferring the \textit{cash} proceeds obtained from selling that object. In the frictionless benchmark analyzed above, there is perfect commitment, and pledging to deliver the object is equivalent to pledging to deliver cash.

More in general, however, the two types of pledges may not be equivalent. On the one hand, any friction reducing the ability to pledge the object, such as non-contractible effort, necessarily also undermines the commitment to transfer the cash generated from the sale of the object. At the same time, certain frictions impede pledging cash but not the object itself---for example, when cash is easier to conceal. This asymmetry suggests that tokens should be preferred over traditional financial instruments. On the other hand, traditional financial instruments enable state-contingent cash transfers that tokens may not be able to replicate; these state-contingent transfers can incentivize effort even when effort itself is not directly contractible.

To explore the interaction between these frictions, I modify the previous model by introducing two new features: non-contractible effort and the possibility of revenue misappropriation. The auction now runs for $T=2$ periods, and the risk-free asset yields a gross return of $R=1$. The auctioneer remains risk-neutral and cash-abundant, but, unlike before, is now less patient than investors: investors do not discount future payoffs, whereas the auctioneer discounts future payoffs using a factor $\beta<1$.\footnote{The results derived earlier are robust to this change: if fully front-loading revenues by setting $\sigma_t=-1$ for all $t$ is optimal when the auctioneer is equally patient as investors, it remains optimal when the auctioneer is less patient than investors.}

The timing of the first period remains unchanged from the earlier model. However, at the beginning of the second period, the auctioneer now exerts observable but non-contractible effort $e \geq 0$ to enhance the quality of the object. Effort uniformly shifts the distribution of bidders' valuations, such that if bidder $i$ draws valuation $v_{i,2}$ from the distribution $F(x)$, his utility from consuming the object becomes $v_{i,2} + e$.\footnote{Because $e$ is observable, the auction remains a private-value setting despite the valuations having a common additive component.} Exerting effort carries a quadratic cost of $\frac{e^2}{2}$ for the auctioneer. Additionally, after second-period revenues are realized but before payments are made to investors, the auctioneer may misappropriate these revenues. Specifically, by incurring a cost $c \cdot f(x)$, the auctioneer can consume revenues $x$ that otherwise would have to be transferred to investors, where $f(0)=0$, $f(x)>0$ for $x>0$ and $f'(x)\geq 0$. The parameter $c \geq 0$ measures the ease of revenue misappropriation. A natural interpretation is that the auctioneer can ``run away with the till'' containing $x$ by paying the cost $c\cdot f(x)$.

\paragraph{First best}

In the first-best, the effort level equates its marginal cost with its marginal benefit, resulting in $e^{**}=1$. Because the auctioneer discounts future payoffs more heavily than investors, in period~1 investors pay the auctioneer an amount equal to the expected second-period revenues and subsequently recoup this investment in period~2. Consequently, the auctioneer fully front-loads all revenues into the first period.

\paragraph{Auction with dollars and outside investors.}

Suppose the auctioneer accepts payments in dollars and issues a contingent security to outside investors. The contingent security specifies a payment from investors to the auctioneer in period~1, denoted by $y_1$, and a contingent period-2 payment from the auctioneer to investors, denoted by $y_2(\sum_i b_{2,i}+e)$. These payments must satisfy the investors' rationality constraint:
\[
E\left[y_2\left(\sum_i b_{2,i}+e\right)\right]=y_1.
\]

Due to the possibility of ``running away with the till,'' in period~2 the auctioneer must always receive a large-eough payoff. Thus, the incentive compatibility constraint is:
\begin{align*}
\sum_i & b_{2,i}+e - y_2\left(\sum_i b_{2,i}+e\right) \geq \\ & \max_{x\leq \sum_i b_{2,i}+e} \left\lbrace \sum_i b_{2,i}+e -x - y(\sum b_{i,2}+e-x) + x - c\cdot f(x)\right\rbrace, \quad \forall \sum_i b_{2,i}, e,
\end{align*}
where $x$ is the optimal amount of revenue misappropriation.

An important observation is that if misappropriating revenues is costless ($c=0$), the incentive compatibility constraint immediately implies $y_2(\sum_i b_{2,i}+e)=0$: no outside investment is feasible. In this scenario, the auctioneer's choice of effort solves:
\[
\max_{e \geq 0} \left\{ k+e - \frac{e^2}{2} \right\},
\]
yielding the efficient level $e^*=1$. However, the timing of revenues remains inefficient because the auctioneer receives and consumes revenues in both periods (remember that the auctioneer is more impatient than investors, and therefore, in the first-best scenario, she consumes only in the first period). 

In contrast, if $c$ is sufficiently large, the first-best outcome can be attained. To see why, note that if the auctioneer exerts the efficient effort level $e^{**}=1$, realized revenues always exceed $\underline v +1$. Thus, the investment contract can require the auctioneer to transfer all period-2 revenues to investors and impose a penalty whenever period-2 revenues fall below $\underline v +1$. Provided this penalty is sufficiently severe and feasible (which occurs whenever $c$ is sufficiently large), efficient effort is induced. Consequently, investors capture all period-2 revenues, satisfying the investors' rationality constraint $E[y_2(\sum_i b_{2,i}+e)] = y_1 = k+1$ and restoring the first-best allocation.

\paragraph{Auction with tokens}

First, note that the possibility of revenue misappropriation (and thus $c \cdot f(x)$) plays no role in the auction with tokens. Without loss of generality, I normalize the initial stock of tokens to $M_1=1$.

In period~2, speculative demand is zero for any level of effort. Hence, for given effort level $e$, the period-2 price of tokens is $\frac{k+e}{M_2}$ where $M_2$ is the total stock of tokens at the beginning of period 2. Thus, the auctioneer chooses the effort level to maximize:
\[
\max_{e\geq 0} \left\{ \alpha (k+e)-\frac{e^2}{2} \right\},
\]
where
\[
\alpha \equiv \frac{A_2}{M_2}
\]
is the fraction of tokens held by the auctioneer at the beginning of period~2. The equilibrium effort is therefore:
\[
e^*=\alpha,
\]
and the expected period-2 token price is:
\[
p_2^e= \frac{k+\alpha}{M_2},
\]

Using the fact that
\[
M_2=A_2+S_1(1+\tau),
\]
We can derive the value of tokens held by investors at the start of period~2 as:
\begin{equation}\label{eq: period-2 value speculation}
S_1(1+\tau) p^e_2=(M_2-A_2) \cdot \frac{k+\alpha}{M_2} = (1-\alpha)(k+\alpha).
\end{equation}
Note, for future reference, that the effect of $\alpha$ on the value of investors' token holdings is ambiguous: a higher $\alpha$ implies higher effort and thus higher expected price, but investors hold a smaller fraction of these revenues. Specifically, expression \eqref{eq: period-2 value speculation} is strictly increasing in $\alpha$ for $\alpha<\frac{1-k}{2}$ and strictly decreasing otherwise.

While the fraction $\alpha$ is treated as given at the beginning of period~2, it is determined in period~1 by the speculative demand $S_1$. This speculative demand, in turn, depends on period-1 realized valuations and on the chosen monetary policy, as the following lemma shows.
\begin{lemma}\label{lem:effort}
For a given level of first-period revenues $\sum_i b_{i,1}$, the equilibrium value of $\alpha$ is given by:
\[
\alpha = \min\left\{ 1, \frac{\sqrt{k^2+4 \sum_i b_{i,1}(1+\sigma)}-k}{2}\right\}.
\]
\end{lemma}
Hence, for any given $\sum_i b_{i,1}$, $\alpha$ equals zero when $\sigma = -1$, is strictly increasing in $\sigma$, and reaches the upper bound of 1 whenever $\sum_i b_{i,1} > \frac{k+1}{1+\sigma}$. 
Note also that if the auctioneer sets $\sigma$ such that $\underline v >\frac{k+1}{1+\sigma}$, the resulting equilibrium coincides exactly with that of the dollar-based auction with $c=0$.


Given this, we can now write the auctioneer's payoff in period~1. Since the expected value of tokens held by investors at the start of period~2 (see equation~\eqref{eq: period-2 value speculation}) must equal the additional revenues earned by the auctioneer in period~1, the auctioneer's period-1 expected utility becomes:
\[
E[U]=  k + E[(1-\alpha)(k+\alpha)] + \beta \left(  E[\alpha (k+e(\alpha))] - \frac{E[e(\alpha)^2]}{2} \right),
\]
where expectations are taken over the realization of period-1 valuations. Differentiating this expression with respect to $\sigma$ yields:\footnote{By the envelope theorem, the effect of $\sigma$ on the optimal choice of effort can be ignored.}
\[
\frac{\partial E[U]}{\partial \sigma} = E\left[ \left(1-k-2\alpha +\beta(k+\alpha)\right)\frac{\partial \alpha}{\partial \sigma} \right].
\]

Recall from Lemma~\ref{lem:effort} that if $\sigma$ satisfies $\underline v \leq \frac{k+1}{1+\sigma}$, then $\alpha=1$ with probability one. Consequently, $\frac{\partial \alpha}{\partial \sigma}$ is zero for all realizations of $\sum_i b_{i,1}$ whenever $\underline v \leq \frac{k+1}{1+\sigma}$, but is strictly positive in expectation otherwise. These observations imply that, if $\beta <1$, the above derivative is negative whenever $\underline v$ is slightly larger than $\frac{k+1}{1+\sigma}$. Thus, the auctioneer optimally sets $\sigma$ such that $\alpha$ is strictly less than~1 in expectation. Intuitively, the auctioneer balances the trade-off between inducing higher effort and front-loading revenues by choosing a monetary policy that generates strictly positive speculative token demand on average. Since the token-based auction with $\underline v = \frac{k+1}{1+\sigma}$ exactly replicates the dollar-based auction with $c=0$, this result also implies that the optimal token-based auction is strictly preferred to the dollar-based auction with $c=0$.

The following proposition summarizes these observations.

\begin{proposition}\label{prop: with frictions}
If revenue misappropriation is sufficiently easy (i.e., $c$ is low), the auctioneer prefers the token-based auction. If revenue misappropriation is sufficiently costly (i.e., $c$ is high), the auctioneer prefers the dollar-based auction combined with a traditional financial instrument.
\end{proposition}

The key takeaway is that the possibility of misappropriating revenues influences the dollar-based auction but has no direct impact on the token-based auction. Nevertheless, if the contracting environment is sufficiently rich, the dollar-based auction, combined with an appropriately designed contingent security, outperforms the token-based auction. Conversely, if the contracting environment is sufficiently limited, the token-based auction is preferred.

\paragraph{Discussion}

The parameter $\alpha$ in the auction with tokens determines the fraction of period-2 revenue kept by the auctioneer, with the remainder accruing to investors. It plays the same role as equity in an auction with dollars. The only difference is that, with tokens, $\alpha$ is not chosen directly, because it is determined by the monetary policy selected in period 1 and by the realized period-1 valuations. By contrast, in an auction with dollars in which the auctioneer uses equity, the auctioneer can set $\alpha$ directly. Apart from this distinction, in line with the benchmark model, the auction with tokens is equivalent to an auction with dollars in which the auctioneer issues equity.  

The key observation is that, with non-contractible effort and the possibility of revenue misappropriation, equity is \textit{not} the optimal financial instrument: the first-best is achieved by a financial contract that sells future revenues to investors while imposing a penalty on the auctioneer whenever those revenues fall below a preset floor.  Such a contract is feasible only when the private cost of misappropriation, denoted by $c$, is high enough to make the penalty credible.

To conclude, note that Proposition \ref{prop: with frictions} extends to any horizon $T>2$.  For the dollar-based auction, the logic is unchanged:  
\begin{itemize}
\item If misappropriating revenues is costless ($c=0$) no outside investment is possible,
\item If $c$ is high enough, the first-best is achievable by a contract that, in each period, transfers all realized revenues to investors and penalizes the auctioneer whenever those revenues fall below a stated floor.
\end{itemize}
With respect to the token-based auction, note that the final period of the repeated auction is identical to the final period of the two-period auction discussed earlier. This last period is therefore inefficient, either because of the time-profile of revenues or because effort is below the first-best level (or both).  Hence, the same trade-off reappears: when $c$ is low the auctioneer prefers tokens because they allow to front-load part of the revenue stream, while for sufficiently large $c$ the dollar auction with its contingent security attains the first best and dominates the auction with tokens.

\section{Conclusions}

Issuing tokens and accepting them as payment allows the auctioneer to receive revenues earlier and reduces revenue risk compared to a standard repeated auction where bidders pay in dollars. Specifically, if the auctioneer commits to destroying all tokens received as payment, she realizes all revenues in the initial period and (almost) entirely eliminates risk. An identical outcome can be replicated in a traditional dollar-based auction if the auctioneer simultaneously issues equity.

The reason for this equivalence is the auctioneer's full commitment, both to deliver the object to token holders and to transfer cash revenues to equity holders. Relaxing this assumption, I introduce non-contractible effort and the possibility of revenue misappropriation. Under these conditions, the equilibrium outcomes of token-based auctions may differ from those of dollar-based auctions with financial securities. On the one hand, misappropriation is a friction that matters for dollar-based auctions but not for token-based ones. On the other hand, financial securities permit rich state-contingent payments that tokens cannot replicate. I show that when contracting frictions are severe, token-based auctions outperform dollar-based auctions; conversely, when frictions are mild or absent, dollar-based auctions combined with financial securities are preferable.

%
%

I have treated token issuance and the issuance of contingent securities as alternative mechanisms for raising funds. However, they don't have to be mutually exclusive. For instance, an auctioneer may issue tokens, sell some on the open market (to be used for bidding), and simultaneously enter into an investment contract promising the delivery of additional tokens in the future. Such a contract could specify penalties if, for example, the value of the tokens (which depends on the value of the object) falls below a given threshold. Any such punishment, however, must be incentive compatible if the auctioneer has the option to ``run away''. Exploring this possibility is left for future work.

Throughout the analysis, I have assumed that the market for tokens is frictionless. In this setting, holding tokens is only inconvenient in that consumption may be delayed by one period, but this delay entails no cost. A more realistic approach would recognize frictions in the token market, which would further diminish the benefits of using tokens. In an even richer framework, these frictions could depend on transaction volume, which in turn would be linked to the value of the object being traded. Extending the model in this direction is also left for future work.


\appendix

\section{Mathematical derivations}\label{appendix-mathematical derivations}

\begin{proof}[Proof of Lemma \ref{lem: initial}]
Consider period $t$. Take the total payments to the auctioneer $\sum_i b_{i,t}$ and the expected future price $p^e_{t+1}$ as given. Note that a bidder can spend 1 dollar to purchase $\frac{1}{p_t}$ tokens in period $t$, these tokens then multiply by $1+\tau_t$ and can be sold in period $t+1$,  for a present-discounted return of $\frac{\beta (1+\tau_t)p^e_{t+1}}{p_t}$. Alternatively, he can invest the same amount of money in the risk-free asset for a present-discounted return of $\beta R =1$ in the following period. It follows that there can be an equilibrium in the market for tokens if and only if $p_t \geq \beta (1+\tau_t)p^e_{t+1}$. 

If $p_t > \beta (1+\tau_t)p^e_{t+1}$, no tokens are purchased and then brought to the next period. The demand for tokens is given by the tokens used for bidding $\frac{\sum_i b_{i,t}}{p_t}$. The supply of tokens is $M_t$, and hence the equilibrium  price  is 
\[
p_t = \frac{\sum_i b_{i,t}}{M_t}
\]
This is indeed an equilibrium if $\frac{\sum_i b_{i,t}}{M_t}>\beta (1+\tau_t) p^e_{t+1}$, that is, if the realized bids are sufficiently high relative to the future expected price. 

If instead $\frac{\sum_i b_{i,t}}{M_t} \leq \beta (1+\tau_t) p^e_{t+1}$, then tokens may be purchased but not used for bidding.  I call this demand the speculative demand for tokens $S_t$. The total demand for tokens is now $S_t+\frac{\sum_i b_{i,t}}{p_t}$, and the equilibrium price is 
\[
p_t = \frac{\sum_i b_{i,t}}{M_t-S_t}=\beta (1+\tau_t) p^e_{t+1}
\]
which pins down  the speculative demand for tokens:
\[
S_t = \max\left\lbrace M_t-\frac{\sum_i^n b_{i,t}}{\beta (1+\tau_t)p^e_{t+1}},0  \right\rbrace
\]
\end{proof}

\begin{proof}[Proof of Proposition \ref{prop:revenues}]
To start, note the following two facts:
\begin{enumerate}
\item in each period the auctioneer will liquidate all his tokens. If the price in a given period is such that $p_t>\beta (1+\tau_t) p^e_{t+1}$, then the auctioneer is better off selling tokens in period $t$ at a higher price than in period $t+1$ at a lower price. If instead $p_t=\beta (1+\tau_t) p^e_{t+1}$ and the auctioneer is risk averse, then again he prefers to earn revenues in period $t$ than to wait one period and earn the same expected revenues (but this time being exposed to risk). If $p_t=\beta (1+\tau_t) p^e_{t+1}$ and the auctioneer is risk neutral then his expected revenues are the same whether he holds tokens between periods (i.e., he acts as a speculator) or not. Without loss of generality, we can assume that he sells all his tokens in period $t$ also in this case;
\item an implication of the above  fact is that, for given $p^e_{t+1}$, $A_{t+1}$ is independent of $A_t$. That is because $A_{t+1}$ depends on the monetary policy parameters $\sigma_t, \tau_t$ and on the tokens received for payment in period $t$, which depend exclusively on $p^e_{t+1}$ and the realization of valuation in period $t$.
\end{enumerate}

I can therefore write the auctioneer's expected continuation revenues in period $t$  recursively as:
\[
E[\Pi_t(A_t)]=p^e_t A_t + \beta E[p_{t+1} A_{t+1} ]+  \beta^2 E[\Pi_{t+2}(A_{t+2})]
\]
where the expectations are taken at the beginning of period $t$, before the valuations are realized. 

Consider now a given  $p^e_{t+1}$. The realization of valuations in period $t$ may be such that $S_t=0$ and hence $p_t=\frac{\sum_i b_{i,t}}{M_t}$. In this case, $A_{t+1}=M_{t+1}$, so that, conditional on $S_t=0$:
\begin{footnotesize}
\begin{align*}
E[\Pi_t(A_t)|S_t=0]= &E[p_t|S_t=0]  A_t + \beta E[p_{t+1} A_{t+1}|S_t=0] +  \beta^2 E[\Pi_{t+2}(A_{t+2})|S_t=0] \\
=& E\left[ \frac{\sum_i b_{i,t}}{M_t} |S_t=0 \right] (M_t-M_t +A_t) +  \beta p^e_{t+1} E[  M_{t+1}|S_t=0] +  \beta^2 E[\Pi_{t+2}(A_{t+2})|S_t=0] 
\\ =&E[\sum_i b_{i,t}|S_t=0] - E[p_t|S_t=0] (M_t-A_t)  +  \beta p^e_{t+1} E[  M_{t+1}|S_t=0] +  \beta^2 E[\Pi_{t+2}(A_{t+2})|S_t=0] 
\end{align*}
\end{footnotesize}
If instead $S_t>0$, then $p_t=\beta  (1+\tau)  p^e_{t+1}$ and  $A_{t+1}=M_{t+1}-S_t (1+\tau_t)\geq 0$. In this case,  conditional on $S_t>0$:
\begin{footnotesize}
\begin{align*}
E&[\Pi_t(A_t)|S_t>0]=  E[p_t|S_t>0]  A_t + \beta E[p_{t+1} A_{t+1}|S_t>0] + \beta^2 E[\Pi_{t+2}(A_{t+2})|S_t>0]  \\ &= E[p_t|S_t>0]  A_t  +  \beta p^e_{t+1} (E[M_{t+1}|S_t>0] -E[S_t|S_t>0] (1+\tau_t)) +  \beta^2 E[\Pi_{t+2}(A_{t+2})|S_t>0]  \\
 &=  E[p_t|S_t>0]  A_t  +  \beta p^e_{t+1}  E[M_{t+1}|S_t>0]- \beta p^e_{t+1} (1+\tau_t) \left( M_t-\frac{E[\sum_i b_{i,t}|S_t>0]}{\beta p^e_{t+1} (1+\tau_t)} \right) +  \beta^2 E[\Pi_{t+2}(A_{t+2})|S_t>0] \\
& =  E\left[ \sum_i b_{i,t}|   S_t>0 \right]  -  E[p_t|S_t>0] (M_t-A_t)   + \beta p^e_{t+1} E[M_{t+1}|S_t>0] +  \beta^2 E[\Pi_{t+2}(A_{t+2})|S_t>0]
\end{align*}
\end{footnotesize}
where I used the definition of $S_t$ as well as the fact that $E[p_t|S_t>0] = \beta p^e_{t+1} (1+\tau_t)$.

The above derivations then imply that, for a given sequence of expected equilibrium prices, the unconditional expected revenues are (again, all expectations are taken at the beginning of period $t$):
\begin{align*}
E[\Pi_t(A_t)]&=k - p^e_t (M_t-A_t)  + \beta p^e_{t+1} E[M_{t+1}] +  \beta^2 E[\Pi_{t+2}(A_{t+2})]\\& = k - p^e_t (M_t-A_t) +\beta  E[\Pi_{t+1}(M_{t+1})]
\end{align*}
where the last equality exploits the fact that, for given equilibrium expected prices, $A_{t+2}$ is independent of $A_{t+1}$ (see point 2 above). The statement then follows by iterating the above equation.

\end{proof}

\

\begin{proof}[Proof of Lemma \ref{lem:effort}]
By using the definition of $\alpha$, I can write
\begin{equation}\label{eq:effort-in-proof}
\alpha = \frac{(1-S_1)(1+\sigma)(1+\tau) }{(1-S_1)(1+\sigma)(1+\tau)+S_1(1+\tau)}
\end{equation}
Next, I use the above expression to write
\begin{equation}\label{eq:p2-in-proof}
p^e_2=\frac{k+\alpha}{(1-S_1)(1+\sigma)(1+\tau)+S_1(1+\tau)}=\frac{(k+\alpha)\alpha}{(1-S_1)(1+\sigma)(1+\tau)}
\end{equation}
I use the definition of $S_1$ and the above expression to obtain:
\[
S_1=\max\left\lbrace 0, 1-\frac{\sum_i b_{i,1}}{p^e_2 (1+\tau)} \right\rbrace = \max\left\lbrace 0, 1-\frac{\sum_i b_{i,1}}{(k+\alpha)\alpha}  (1-S_1)(1+\sigma) \right\rbrace
\]
\[
1-S_1= \min \left\lbrace 1, \frac{\sum_i b_{i,1}}{(k+\alpha)\alpha} (1-S_1)(1+\sigma) \right\rbrace
\]
Suppose $\alpha<1$ so that $0<S_1\leq 1$. Then $\alpha$ must be such that 
\[
\alpha = \sqrt{\left( \frac{k}{2}\right)^2+(\sum_i b_{i,1})(1+\sigma)}-\frac{k}{2}  
\]
this can be the solution as long as $\alpha<1$, or $k+1>\sum_i b_{i,1} (1+\sigma)$.
If instead $\alpha=1$, then $S_1=0$ and it must be that 
\[
k+1<\sum_i b_{i,1} (1+\sigma)
\]
Putting everything together, we have that 
\[
\alpha = \min \left\lbrace 1, \sqrt{\left( \frac{k}{2}\right)^2+\sum_i b_{i,1} (1+\sigma)}-\frac{k}{2}  \right\rbrace.
\]
\end{proof}


\bibliography{bibliography}{}
\bibliographystyle{chicago}

\end{document}